\def\ot{\otimes}
\def\D{\textsf{D}}\def\H{\textsf{H}}\def\S{\textsf{S}}\def\T{\textsf{T}}
\newcommand{\inner}[2]{\langle #1 , #2\rangle}
\newcommand{\innerm}[3]{\langle #1 | #2 | #3\rangle}
\newcommand{\out}[2]{| #1\rangle\langle #2 |}
\newcommand{\trans}{{\scriptstyle\mathsf{T}}}
\newcommand{\pa}[1]{(#1)}
\newcommand{\set}[1]{\{#1\}}
\newcommand{\Set}[1]{\left\{#1\right\}}
\newcommand{\ket}[1]{|#1\rangle}
\def\Jamiolkowski{J}
\newcommand{\jam}[1]{\Jamiolkowski\pa{#1}}
\DeclareMathOperator{\vectorize}{vec}
\newcommand{\col}[1]{\vectorize\pa{#1}}
\newcommand{\row}[1]{\vectorize\pa{#1}^{\dagger}}
\DeclareMathOperator{\trace}{Tr}
\newcommand{\ptr}[2]{\trace_{#1}\pa{#2}}
\newcommand{\tr}[1]{\ptr{}{#1}}
\newcommand{\fontmapset}{\mathbf} 
\newcommand{\mset}[2]{\fontmapset{#1}\pa{#2}}
\newcommand{\lin}[1]{\mset{L}{#1}}
\newcommand{\identity}{\mathbbm{1}}
\newcommand{\idsup}[1]{\identity_{#1}}
\newcommand{\Natural}{\mathbb{N}}
\newcommand{\Complex}{\mathbb{C}}
\def\cH{\mathcal{H}}
\def\cK{\mathcal{K}}
\newtheorem{thrm}{Theorem}[section]
\newtheorem{prop}[thrm]{Proposition}
\theoremstyle{definition}
\newtheorem{definition}[thrm]{Definition}
\numberwithin{equation}{section}
\begin{document}

\title{The dynamical additivity and the strong dynamical additivity \\of quantum operations}
\author{Zhang Lin and Wu Junde\\ \small
Department of Mathematics, Zhejiang University, 310027 Hangzhou,
P.~R.~China} \date{E-mail: linyz@zju.edu.cn,
wjd@zju.edu.cn}\maketitle

\begin{abstract}
In the paper, the dynamical additivity of bi-stochastic quantum
operations is characterized and the strong dynamical additivity is
obtained under some restrictions.\\~\\
PACS numbers: 02.10.Ud, 03.67.-a, 03.65.Yz

\end{abstract}


\section{Introduction}

In quantum information theory, there are two well-known entropic
inequalities for quantum-mechanical systems, that is, subadditivity
of entropy of bipartite quantum state $\rho^{AB}$:
$$
\S(\rho^{AB}) \leqslant \S(\rho^A) + \S(\rho^B)
$$
and strong subadditivity of entropy of tripartite quantum state
$ABC$:
$$
\S(\rho^{ABC}) + \S(\rho^B) \leqslant \S(\rho^{AB}) + \S(\rho^{BC}),
$$
where $\rho^X(X = A,B,AB,BC)$ are the reduction to corresponding
system $X$. In quantum information processing, one are especially
interested in the extreme cases of quantum states, for instance,
under what conditions the subadditivity or strong subadditivity
inequality of entropy of quantum states are saturated? By the
Pinsker's inequality \cite{Pinsker}:
$$
\S(\rho^A) + \S(\rho^B) - \S(\rho^{AB}) \geqslant \frac{1}{2\ln
2}\left(\left\|\rho^{AB} - \rho^A \ot \rho^B\right\|_1\right)^2,
$$
where $\|*\|_1$ are the trace-norm, it follows that $\S(\rho^{AB}) =
\S(\rho^A) + \S(\rho^B)$ if and only if $\rho^{AB} = \rho^A \ot
\rho^B$. This resolves the saturation of subadditivity inequality.
Compared with the subadditivity, the more complicated construction
that follows will give the solution to the saturation of the strong
subadditivity inequality. The description is as follows
\cite{Hayden}: a tripartite state $\rho^{ABC}$ are such that
$\S(\rho^{AB}) + \S(\rho^{BC}) = \S(\rho^{ABC}) + \S(\rho^B)$ if and
only if there is a decomposition of Hilbert space $\cH^B$ which is
used to describe the system $B$:
$$
\cH^B = \bigoplus_k \cH^L_k \ot \cH^R_k
$$
into a direct sum of tensor products such that
$$
\rho^{ABC} = \bigoplus_k  p_k \rho^{AL}_k \ot \rho^{RC}_k,
$$
where $\rho^{AL}_k$ is a state on $\cH^A \ot \cH^L_k$ and
$\rho^{RC}_k$ is a state on $\cH^R_k \ot \cH^C$ for each index $k$
and $\set{p_k}$ is a probability distribution.

Similarly these problems above-mentioned can be considered in the
regime of quantum operations. Let $\Phi$, $\Lambda$ and $\Psi$ be
three stochastic quantum operations (the notations will be explained
later) on a quantum system space $\cH$. The study on the behavior of
map entropy of composition of stochastic quantum operations is an
important and interesting problem. Recently Roga \emph{et. al.}
\cite{Roga} obtained that if $\Phi$ is bi-stochastic, then we have
the \emph{dynamical subadditivity}:
$$
\S(\Phi \circ \Psi) \leqslant \S(\Phi) + \S(\Psi).
$$
Moreover, if $\Phi$, $\Lambda$ and $\Psi$ are all bi-stochastic,
then we have the \emph{strong dynamical subadditivity}:
$$
\S(\Phi \circ \Lambda \circ \Psi) + \S(\Lambda) \leqslant \S(\Phi
\circ \Lambda) + \S(\Lambda \circ \Psi).
$$

In this paper, motivated by the structure of states which saturate
the inequality of strong subadditivity of quantum entropy, we
discuss under what conditions the dynamical subadditivity and the
strong dynamical subadditivity can be saturated, that is, the
dynamical additivity and the strong dynamical additivity. Firstly,
by using entropy-preserving extensions of quantum states, a
characterization of dynamical additivity of bi-stochastic quantum
operations is obtained. Next, we show that if quantum operations are
local operations \cite{Chiribella,D'Ariano} and have some kind of
orthogonality, then the strong dynamical additivity is also true.


\section{Preliminaries}

In this section we clarify the notations used in our paper.
Throughout the paper, only finite-dimensional Hilbert spaces $\cH$
are considered. Let $\lin{\cH}$ be the set of all linear operators
from $\cH$ to $\cH$. A state $\rho$ of some quantum system,
described by $\cH$, is a positive semi-definite operator of trace
one, in particular, for each unit vector $\ket{\psi} \in \cH$, the
operator $\rho = \out{\psi}{\psi}$ is said to be a \emph{pure
state}. The set of all states on $\cH$ is denoted by $\D(\cH)$.

If $X, Y \in \lin{\cH}$, then $\inner{X}{Y} = \tr{X^{\dagger}Y}$
defines an inner product on $\lin{\cH}$, which is called the
\emph{Hilbert-Schmidt inner product}. It is easily seen that if $X,
Y\in\lin{\cH}$ are two positive semi-definite operators, $X$ and $Y$
are orthogonal, i.e., $\inner{X}{Y} = 0$, if and only if $XY=0$.

Let $S,T\in\lin{\cH_1 \ot \cH_2}$ be two positive semi-definite
operators, where $\cH_1 = \cH_2 = \cH$. Denote $Y_1=\ptr{2}{Y},
Y_2=\ptr{1}{Y}(Y = S,T)$. Then $S_{1}, T_{1}, S_{2},
T_{2}\in\lin{\cH}$ are all positive semi-definite operators. If
$S_1T_1=S_2T_2=0$, then $S$ and $T$ are said to be
\emph{bi-orthogonal} \cite{Herbut}. Thus the notion of a state
decomposition that is bi-orthogonal is defined as follows:

\begin{definition}(\cite{Herbut})\label{Herbut}
Let $\rho^{AB}$ be a bipartite state in $\D(\cH_A \ot \cH_B)$. The
following state decomposition $ \rho^{AB} = \sum_k p_k \rho^{AB}_k$,
where $\rho^{AB}_k \in \D(\cH_A \ot \cH_B)$ for each $k$ and
$\set{p_k}$ is a probability distribution with each $p_k>0$, is
called \emph{bi-orthogonal} if, in terms of the reductions of
$\rho^{AB}_k$, $ \rho^X_k \rho^X_{k'} = 0 (X = A, B; \forall k \neq
k')$.
\end{definition}


Let $\set{\ket{m}}$ be the standard basis for $\cH_2$,
correspondingly $\set{\ket{\mu}}$ for $\cH_1$. For each $P =
\sum_{m,\mu}p_{m\mu} \out{m}{\mu} \in \lin{\cH_1,\cH_2}$, if we
denote $\col{P} = \sum_{m,\mu}p_{m\mu}\ket{m\mu}$, then
$\mathbf{vec}$ defines a simple correspondence between
$\lin{\cH_1,\cH_2}$ and $\cH_2 \ot \cH_1$. Moreover, if $\cH_A$ and
$\cH_B$ are two Hilbert spaces, $\set{\ket{m}}$ and
$\set{\ket{\mu}}$ are their standard bases, respectively, then we
can also define a map $\mathbf{vec}$ over a bipartite space that
describes a change of bases from the standard basis of
$\lin{\cH_{A}\ot\cH_{B}}$ to the standard basis of $\cH_{A} \ot
\cH_{A} \ot \cH_{B} \ot \cH_{B}$, that is,
$$
\col{\out{m}{n} \ot \out{\mu}{\nu}} = \ket{mn}\ot\ket{\mu\nu}.
$$
The following properties of the $\mathbf{vec}$ map are easily
verified \cite{Watrous}:
\begin{enumerate}
\item The $\mathbf{vec}$ map is a linear bijection. It is also an
isometry, in the sense that
$$
\inner{X}{Y} = \inner{\col{X}}{\col{Y}}
$$
for all $X,Y \in
\lin{\cH_1,\cH_2}$.

\item For every choice of operators $A \in \lin{\cH_1,\cK_1}, B \in
\lin{\cH_2,\cK_2}$, and $X \in \lin{\cH_2,\cH_1}$, it holds that
$$
(A \ot B) \col{X} = \col{A X B^\trans}.
$$

\item For every choice of operators $A, B \in
\lin{\cH_1, \cH_2}$, the following equations hold:
\begin{eqnarray*}
\ptr{1}{\col{A}\row{B}} & = & AB^\dagger, \\ \ptr{2}{\col{A}\row{B}}
 & = & (B^\dagger A)^\trans.
\end{eqnarray*}

\item If $X\in\lin{\cH_{A}}$, $Z\in\lin{\cH_{B}}$, then $\col{X\ot
Z}=\col{X}\ot\col{Z}$ .
\end{enumerate}


Denote by $\T(\cH)$ the set of all \emph{linear super-operators}
from $\lin{\cH}$ to $\lin{\cH}$. For each $\Phi\in\T(\cH)$, it
follows from the Hilbert-Schmidt inner product of $\lin{\cH}$ that
there is a linear super-operator $\Phi^{\dagger}\in\T(\cH)$ such
that $\inner{\Phi(X)}{Y}=\inner{X}{\Phi^{\dagger}(Y)}$ for any $X,
Y\in \lin{\cH}$. $\Phi^{\dagger}$ is referred to the \emph{dual
super-operator} of $\Phi$.


$\Phi\in\T(\cH)$ is said to be \emph{completely positive} (CP) if
for each $k\in\Natural$,
$\Phi\ot\idsup{M_{k}(\Complex)}:\lin{\cH}\ot
M_{k}\pa{\Complex}\to\lin{\cH}\ot M_{k}\pa{\Complex}$ is positive,
where $M_{k}\pa{\Complex}$ is the set of all $k\times k$ complex
matrices. It follows from the famous theorems of Choi \cite{Choi}
and Kraus \cite{Kraus1} that $\Phi$ can be represented in the
following form: $\Phi=\sum_{j}Ad_{M_{j}}$, where
$\{M_j\}_{j=1}^n\subseteq \lin{\cH}$, that is,
$\Phi(X)=\sum_{j=1}^nM_jXM_j^\dagger,\,\, X\in \lin{\cH}$.
Throughout this paper, $\dagger$ means the adjoint operation of an
operator. Moreover, if $\{M_j\}_{j=1}^n$ is pairwise orthogonal,
then $\Phi=\sum_{j}Ad_{M_{j}}$ is said to be a canonical
representation of $\Phi$. In \cite{Choi,Kraus2}, it was proved that
each quantum operation has a canonical representation.


The so-called \emph{quantum operation} on $\cH$ is just a CP and
trace non-increasing super-operator $\Phi\in \T(\cH)$, moreover, if
$\Phi$ is CP and trace-preserving, then it is called
\emph{stochastic}; if $\Phi$ is stochastic and unit-preserving, then
it is called \emph{bi-stochastic}.


The famous \emph{Jamio{\l}kowski isomorphism}
$J:\T(\cH)\longrightarrow\lin{\cH\ot\cH}$ transforms each
$\Phi\in\T(\cH)$ into an operator $\jam{\Phi}\in\lin{\cH\ot\cH}$,
where
$\jam{\Phi}=\Phi\ot\idsup{\lin{\cH}}(\col{\idsup{\cH}}\row{\idsup{\cH}})$.
If $\Phi\in \T(\cH)$ is CP, then $\jam{\Phi}$ is a positive
semi-definite operator, in particular, if $\Phi$ is stochastic, then
$\frac{1}{N}\jam{\Phi}$ is a state on $\cH\ot\cH$, we denote the
state by $\rho(\Phi)$, \cite{Bengtsson}.


The information encoded in a quantum state $\rho\in\D\pa{\cH}$ is
quantified by its \emph{von Neumann entropy}
$\S\pa{\rho}=-\tr{\rho\log_{2}\rho}$. If $\Phi\in \T(\cH)$ is a
stochastic quantum operation, we denote the von Neumann entropy
$\S(\rho(\Phi))$ of $\rho(\Phi)$ by $\S(\Phi)$ which is called
\emph{map entropy},  $\S(\Phi)$ describes the decoherence induced by
the quantum operation $\Phi$.


\section{Entropy-Preserving Extensions of Quantum States and the Dynamical Additivity}

The technique of quantum state extension without changing entropy is
a very important and useful tool. It is employed by Datta to
construct an example which shows equivalence of the positivity of
quantum discord and strong subadditivity for quantum mechanical
systems. Based on this fact, Datta obtained that zero discord states
are precisely those states which satisfy the strong additivity for
quantum mechanical systems \cite{Datta}. In what follows, we will
use it to give a characterization of dynamical additivity of map
entropy.

The next proposition is concerned with one type of quantum state
extensions without changing entropy.

\begin{prop}
Let $\rho \in \D(\cH)$. If $\set{\ket{i}}$ is a basis for $\cH$ and
$\rho = \sum_{i,j=1}^{N}\rho_{i,j}\out{i}{j}$, then
$\widetilde{\rho} = \sum_{i,j=1}^{N}\rho_{i,j}\out{ii}{jj}$ is a
state in $\D(\cH \ot \cH)$, and $\S(\widetilde{\rho}) = \S(\rho)$.
\end{prop}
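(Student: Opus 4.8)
The plan is to exhibit a unitary-like correspondence between $\rho$ and $\widetilde\rho$ that makes the equality of entropies transparent. First I would introduce the isometry $V\colon\cH\to\cH\ot\cH$ defined on the given basis by $V\ket{i}=\ket{ii}$; one checks immediately that $V^\dagger V=\idsup{\cH}$, so $V$ is an isometry onto the subspace $\cK\defeq\mathrm{span}\set{\ket{ii}:i=1,\dots,N}$ of $\cH\ot\cH$. The key observation is that $\widetilde\rho=V\rho V^\dagger$: indeed $V\pa{\out{i}{j}}V^\dagger=\out{ii}{jj}$, and extending linearly over the expansion $\rho=\sum_{i,j}\rho_{i,j}\out{i}{j}$ gives exactly $\widetilde\rho=\sum_{i,j}\rho_{i,j}\out{ii}{jj}$.

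From $\widetilde\rho=V\rho V^\dagger$ with $V$ an isometry, the two required facts follow routinely. First, $\widetilde\rho$ is positive semi-definite because $V\rho V^\dagger\geqslant 0$ whenever $\rho\geqslant 0$, and $\tr{\widetilde\rho}=\tr{V\rho V^\dagger}=\tr{V^\dagger V\rho}=\tr{\rho}=1$, so $\widetilde\rho\in\D(\cH\ot\cH)$. Second, $V$ restricts to a unitary from $\cH$ onto $\cK$, so $\rho$ and $\widetilde\rho$ (viewed as an operator on $\cK$, padded by zeros on $\cK^\perp$) are unitarily equivalent; hence they have the same spectrum, including multiplicities of the zero eigenvalue except for the extra zeros coming from $\cK^\perp$, which contribute $0\log_2 0=0$ to the von Neumann entropy. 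Therefore $\S(\widetilde\rho)=-\tr{\widetilde\rho\log_2\widetilde\rho}=-\tr{\rho\log_2\rho}=\S(\rho)$.

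I expect essentially no obstacle here; the only point requiring a little care is the bookkeeping of the zero eigenvalues, i.e.\ making explicit that enlarging the ambient space from $\cH$ to $\cH\ot\cH$ only adds eigenvalue-$0$ contributions and that these are harmless under the convention $0\log_2 0=0$. An alternative, equally short route avoids even this by diagonalizing $\rho=\sum_k\lambda_k\out{e_k}{e_k}$, setting $\ket{f_k}\defeq V\ket{e_k}\in\cH\ot\cH$, noting the $\set{\ket{f_k}}$ are orthonormal (since $V$ is an isometry), and observing that $\widetilde\rho=\sum_k\lambda_k\out{f_k}{f_k}$ is then already in spectral form with the same eigenvalue list $\set{\lambda_k}$; the entropy depends only on this list, so $\S(\widetilde\rho)=\S(\rho)$ is immediate.
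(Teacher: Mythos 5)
Your proof is correct and is essentially the paper's argument: the paper diagonalizes $\rho=\sum_k\lambda_k\out{x_k}{x_k}$ and checks that the vectors $\sum_i x_k^{(i)}\ket{ii}$ (which are exactly $V\ket{x_k}$ in your notation) remain orthonormal, so $\widetilde\rho$ has the same spectrum as $\rho$ --- precisely the ``alternative route'' you sketch at the end. Packaging this as conjugation by the isometry $V\ket{i}=\ket{ii}$ is a clean, coordinate-free rephrasing of the same idea, and your handling of the extra zero eigenvalues is fine.
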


\begin{proof}
By the spectral decomposition theorem, $\rho =
\sum_{k}\lambda_{k}\out{x_{k}}{x_{k}}$, where $\lambda_{k} \geqslant
0$, $\set{\ket{x_{k}}}$ is an orthonormal set for $\cH$. This
implies that $\rho_{i,j} = \innerm{i}{\rho}{j} =
\sum_{k}\lambda_{k}\langle i\out{x_{k}}{x_{k}}j\rangle =
\sum_{k}\lambda_{k}x^{(i)}_{k}\bar{x}^{(j)}_{k}$. Note that
$\set{\ket{x_{k}}}$ is an orthonormal set for $\cH$, so
$\sum_{i=1}^{N}x^{(i)}_{m}\bar{x}^{(i)}_{n} = \delta_{mn}$. Now
\begin{eqnarray*}
\widetilde{\rho} & = &
\sum_{i,j=1}^{N}(\sum_{k}\lambda_{k}x^{(i)}_{k}\bar{x}^{(j)}_{k})\out{i}{j}\ot\out{i}{j}
 = \sum_{k}\lambda_{k}(\sum_{i,j=1}^{N}x^{(i)}_{k}\bar{x}^{(j)}_{k}\out{i}{j}\ot\out{i}{j})\\
& = &
\sum_{k}\lambda_{k}(\sum_{i=1}^{N}x^{(i)}_{k}\ket{ii})(\sum_{i=1}^{N}x^{(i)}_{k}\ket{ii})^\dagger
= \sum_{k}\lambda_{k}\col{X_{k}}\row{ X_{k}},
\end{eqnarray*}
where $\col{X_{k}}=\sum_{i=1}^{N}x^{(i)}_{k}\ket{ii}\in \cH\ot\cH$.
Moreover, it is easy to show that $\row{ X_{m}}\col{X_{n}} =
\delta_{mn}$, thus $\widetilde{\rho}$ is a state on $\cH \ot \cH$.
It is obvious that $\S(\widetilde{\rho})=\S(\rho)$.
\end{proof}

Let $\Lambda\in\T(\cH)$ be stochastic. If $\Lambda$ has two Kraus
representations
$\Lambda=\sum_{p=1}^{d_{1}}Ad_{S_{p}}=\sum_{q=1}^{d_{2}}Ad_{T_{q}}$,
$\rho\in\D(\cH)$, take two Hilbert spaces $\cH_1$ and $\cH_2$ such
that dim $\cH_1=d_1$, dim $\cH_2=d_2$, $\set{\ket{m}}$ and
$\set{\ket{\mu}}$ are the base of $\cH_{1}$ and $\cH_{2}$,
respectively. Define
$$
\gamma_{1}(\Lambda)  =  \sum_{m,n=1}^{d_1}\tr{S_{m}\rho
S_{n}^\dagger}\out{m}{n}\mbox{\ and\ } \gamma_{2}(\Lambda)  =
\sum_{\mu,\nu=1}^{d_{2}}\tr{T_{\mu}\rho
T_{\nu}^\dagger}\out{\mu}{\nu}.
$$
Then $\gamma_{k} \in \D(\cH_k)(k = 1,2)$, and
$\S(\gamma_{1}(\Lambda))=\S(\gamma_{2}(\Lambda))$.

In fact, without loss of generality, we may assume $d_{1}=d_{2}=d$.
Then there exists a $d \times d$ unitary matrix $U=[u_{m\mu}]$ such
that for each $1\leq m\leq d$,
$S_{m}=\sum_{\mu=1}^{d}u_{m\mu}T_{\mu}$. Thus
\begin{eqnarray*}
\sum_{m,n=1}^{d} \tr{S_{m}\rho S_{n}^\dagger} \out{m}{n} & = &
\sum_{m,n=1}^{d} \tr{(\sum_{\mu=1}^{d} u_{m\mu}T_{\mu}) \rho
(\sum_{\mu=1}^{d}u_{n\mu}T_{\mu})^\dagger} \out{m}{n}\\
& = & U\left[\sum_{\mu,\nu=1}^{d} \tr{T_{\mu} \rho
T_{\nu}^\dagger}\out{m}{n}\right]U^\dagger.
\end{eqnarray*}
Let $V:\cH_1\longrightarrow\cH_2$ be a unitary operator such that
$V\ket{m} = \ket{\mu}$. Then
$$
\sum_{m,n=1}^{d}\tr{S_{m}\rho
S_{n}^\dagger}\out{m}{n} =
UV\left[\sum_{\mu,\nu=1}^{d}\tr{T_{\mu}\rho
T_{\nu}^\dagger}\out{\mu}{\nu}\right]V^\dagger U^\dagger,
$$
which implies that $\gamma_{1}$ and $\gamma_{2}$ are unitarily
equivalent and thus the conclusion follows \cite{Petz}.

For each stochastic $\Lambda\in\T(\cH)$ and $\rho\in\D(\cH)$, we
denote $\S(\rho;\Lambda)$ by $\S(\gamma_{1}(\Lambda))$. It follows
from the above discussion that $\S(\rho;\Lambda)$ is well-defined
\cite{Lindblad}. Moreover, it is easy to see that if
$\rho=\frac{1}{N}\identity$, then $\S(\rho;\Lambda)=\S(\Lambda)$,
\cite{Roga}.

It follows from above that if $\Phi, \Psi\in \T(\cH)$ are two
bi-stochastic quantum operations, $\Phi=\sum_{m=1}^{N^2}Ad_{S_{m}}$
and $\Psi=\sum_{\mu=1}^{N^2}Ad_{T_{\mu}}$ are their canonical
representations, respectively. Taking a $N^2$ dimensional complex
Hilbert space $\cH_{0}$, for each $\rho\in\D(\cH)$, we define
$$\gamma(\Phi\circ\Psi) = \sum_{m,n,\mu,\nu=1}^{N^2}\tr{S_{m}T_{\mu}\rho(S_{n}T_{\nu})^\dagger}
\out{m\mu}{n\nu},$$ then $\gamma(\Phi\circ\Psi)$ is a state on
$\cH_0\ot\cH_0$, and when $\rho=\frac{1}{N}\identity$,
$\S(\gamma(\Phi\circ\Psi))=\S(\Phi\circ\Psi)$, that is, $\S(\rho,
\Phi\circ\Psi) = \S(\Phi\circ\Psi)$.

Our mail result of this section is the following:

\begin{thrm}\label{Th:additivity}
Let $\Phi, \Psi\in \T(\cH)$ be two bi-stochastic quantum operations,
$\Phi(\rho)=\sum_{m=1}^{N^2}Ad_{S_{m}}$ and
$\Psi=\sum_{\mu=1}^{N^2}Ad_{T_{\mu}}$ be their canonical
representations, respectively. Then
$\S(\Phi\circ\Psi)=\S(\Phi)+\S(\Psi)$ if and only if
$\tr{S_{m}T_{\mu}(S_{n}T_{\nu})^\dagger}=\frac{1}{N}\tr{S_{m}
S_{n}^\dagger}\tr{T_{\mu}T_{\nu}^\dagger}$; i.e.,
$\inner{S_{n}T_{\nu}}{S_{m}T_{\mu}}=\frac{1}{N}\inner{S_{n}}{S_{m}}\inner{T_{\nu}}{T_{\mu}}$
for all $m,n,\mu,\nu=1,\ldots,N^2$.
\end{thrm}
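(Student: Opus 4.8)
The plan is to realize the three map entropies $\S(\Phi)$, $\S(\Psi)$, $\S(\Phi\circ\Psi)$ simultaneously as the von Neumann entropy of a single bipartite state and the entropies of its two marginals, and then to apply the saturation criterion for subadditivity recalled in the Introduction, namely $\S(\rho^{AB})=\S(\rho^A)+\S(\rho^B)$ if and only if $\rho^{AB}=\rho^A\ot\rho^B$.

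Concretely, I would fix $\rho=\frac1N\identity$ and take the bipartite state $\sigma^{AB}\defeq\gamma(\Phi\circ\Psi)=\sum_{m,n,\mu,\nu=1}^{N^2}\tr{S_mT_\mu\rho(S_nT_\nu)^\dagger}\out{m\mu}{n\nu}$ on $\cH_0\ot\cH_0$, viewing the first tensor factor $A$ (with basis $\set{\ket m}$) as carrying the Kraus index of $\Phi$ and the second factor $B$ (with basis $\set{\ket\mu}$) as carrying that of $\Psi$; note $\set{S_mT_\mu}$ is a Kraus representation of $\Phi\circ\Psi$, so by the discussion preceding the theorem $\S(\sigma^{AB})=\S(\rho;\Phi\circ\Psi)=\S(\Phi\circ\Psi)$. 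Then I would compute the two reductions. Tracing out $B$ and using that $\Psi$ is unital gives $\ptr{B}{\sigma^{AB}}=\sum_{m,n}\tr{S_m\Psi(\rho)S_n^\dagger}\out{m}{n}=\sum_{m,n}\tr{S_m\rho S_n^\dagger}\out{m}{n}=\gamma_1(\Phi)$, whence $\S(\ptr{B}{\sigma^{AB}})=\S(\rho;\Phi)=\S(\Phi)$. Tracing out $A$ and using that $\Phi$ is trace preserving, i.e. $\sum_mS_m^\dagger S_m=\identity$, gives $\ptr{A}{\sigma^{AB}}=\sum_{\mu,\nu}\tr{T_\mu\rho T_\nu^\dagger}\out{\mu}{\nu}=\gamma_2(\Psi)$, whence $\S(\ptr{A}{\sigma^{AB}})=\S(\rho;\Psi)=\S(\Psi)$.

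With these three identities, the saturation criterion gives that $\S(\Phi\circ\Psi)=\S(\Phi)+\S(\Psi)$ is equivalent to $\sigma^{AB}=\ptr{B}{\sigma^{AB}}\ot\ptr{A}{\sigma^{AB}}$. The last step is to read this operator identity off in the product basis $\set{\out{m\mu}{n\nu}}$: the $(m\mu,n\nu)$-entry of the left-hand side is $\frac1N\tr{S_mT_\mu(S_nT_\nu)^\dagger}$ while that of the right-hand side is $\frac1{N^2}\tr{S_mS_n^\dagger}\tr{T_\mu T_\nu^\dagger}$, so equality of the two states amounts to $\tr{S_mT_\mu(S_nT_\nu)^\dagger}=\frac1N\tr{S_mS_n^\dagger}\tr{T_\mu T_\nu^\dagger}$ for all indices, which by cyclicity of the trace is precisely the claimed relation $\inner{S_nT_\nu}{S_mT_\mu}=\frac1N\inner{S_n}{S_m}\inner{T_\nu}{T_\mu}$.

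I do not expect a real obstacle: the dynamical subadditivity of Roga \emph{et al.} is not even needed, since $\S(\sigma^{AB})\leqslant\S(\ptr{B}{\sigma^{AB}})+\S(\ptr{A}{\sigma^{AB}})$ is just ordinary subadditivity. The two places that need care are (i) checking that the partial traces collapse exactly to $\gamma_1(\Phi)$ and $\gamma_2(\Psi)$ — this is where unitality of $\Psi$ and trace preservation of $\Phi$ enter, and it is exactly why one must take the input state to be $\frac1N\identity$; and (ii) keeping track of the factors of $\frac1N$ when passing from the operator equation $\sigma^{AB}=\sigma^A\ot\sigma^B$ to the scalar relation among the Hilbert--Schmidt inner products.
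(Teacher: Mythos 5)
Your proof is correct and follows essentially the same route as the paper: both realize $\S(\Phi\circ\Psi)$, $\S(\Phi)$, $\S(\Psi)$ as the entropy of $\gamma(\Phi\circ\Psi)$ at $\rho=\frac1N\identity$ and of its two marginals, invoke the saturation criterion $\S(\rho^{AB})=\S(\rho^A)+\S(\rho^B)\Leftrightarrow\rho^{AB}=\rho^A\ot\rho^B$, and read off the matrix entries. Your version is in fact slightly more explicit than the paper's, since you spell out where unitality of $\Psi$ and trace preservation of $\Phi$ enter the partial-trace computations.
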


\begin{proof} The Jamio{\l}kowski isomorphisms of $\Phi$ and $\Psi$ are
$$
\jam{\Phi} = \sum_{m=1}^{N^2} \col{S_{m}}\row{S_{m}}, \jam{\Psi} =
\sum_{\mu=1}^{N^2} \col{T_{\mu}}\row{T_{\mu}},
$$
respectively, where $\inner{\col{S_{m}}}{\col{S_{n}}} =
s_{m}\delta_{mn}$ and $\inner{\col{T_{\mu}}}{\col{T_{\nu}}} =
t_{\mu}\delta_{\mu\nu}$. For each $\rho \in \D(\cH)$, let
$$\gamma(\Phi\circ\Psi) = \sum_{m,n,\mu,\nu=1}^{N^2} \tr{S_{m}T_{\mu}\rho(S_{n}T_{\nu})^\dagger}\out{m\mu}{
n\nu} = \sum_{m,n,\mu,\nu=1}^{N^2} \tr{S_{m}T_{\mu}\rho(S_{n}T_{\nu})^\dagger}\out{m}{n}\ot\out{\mu}{\nu}.$$
Then we have
\begin{eqnarray*}
\gamma(\Psi) & = & \sum_{\mu,\nu=1}^{N^2}\tr{T_{\mu}\rho
T_{\nu}^\dagger}\out{\mu}{\nu}=\ptr{1}{\gamma(\Phi\circ\Psi)},\\
\gamma(\Phi) & = & \sum_{m,n=1}^{N^2}\tr{S_{m}\rho
S_{n}^\dagger)}\out{m}{n} = \ptr{2}{\gamma(\Phi\circ\Psi)}.
\end{eqnarray*}
Note that when $\rho=\frac{1}{N}\identity$,
$\S(\gamma(\Phi\circ\Psi))=\S(\Phi\circ\Psi)$,
$\S(\gamma(\Psi))=\S(\Psi)$ and $\S(\gamma(\Phi))=\S(\Phi)$. Thus,
we have
\begin{eqnarray*}
\S(\Phi\circ\Psi)=\S(\Phi)+\S(\Psi)&\Leftrightarrow&\S(\gamma(\Phi))+\S(\gamma(\Psi))=\S(\gamma(\Phi\circ\Psi))\\\
&\Leftrightarrow&\gamma(\Phi\circ\Psi)=\gamma(\Phi)\ot\gamma(\Psi)\\
&\Leftrightarrow&\tr{S_{m}T_{\mu}(S_{n}T_{\nu})^\dagger}=\frac{1}{N}\tr{S_{m}
S_{n}^\dagger}\tr{T_{\mu}T_{\nu}^\dagger}\\&&=\frac{s_{m}t_{\mu}}{N}\delta_{mn}\delta_{\mu\nu}(\forall
m,n,\mu,\nu=1,\ldots,N^2).
\end{eqnarray*}
\end{proof}


\section{Bi-orthogonal Decomposition and Strong Dynamical Additivity}

In order to study the strong dynamical additivity, we need the
following bi-orthogonality and the bi-orthogonal decomposition of
quantum operations.

Let $\Phi, \Psi\in\T\pa{\cH}$ be CP super-operators. If their
Jamio{\l}kowski isomorphisms $\jam{\Phi}$ and $\jam{\Psi}$ are
bi-orthogonal, then $\Phi$ and $\Psi$ are said to be
\emph{bi-orthogonal}.

\begin{prop}\label{bi-ortho}
If $\Phi=\sum_{\mu}Ad_{M_{\mu}}$, $\Psi=\sum_{\nu}Ad_{N_{\nu}}$,
then $\Phi$ and $\Psi$ are bi-orthogonal if and only if
$M^\dagger_{\mu}N_{\nu}=0$ and $M_{\mu}N^\dagger_{\nu}=0$ for all
$\mu$ and $\nu$, if and only if $\Phi\circ\Psi^{\dagger}=0$ and
$\Phi^{\dagger}\circ\Psi=0$, if and only if
$\Psi\circ\Phi^{\dagger}=0$ and $\Psi^{\dagger}\circ\Phi=0$.
\end{prop}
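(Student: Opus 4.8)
The plan is to unwind the definition of bi-orthogonality through the $\mathbf{vec}$ formalism and then translate the resulting operator identities into statements about the super-operators $\Phi\circ\Psi^\dagger$, etc. Recall $\jam{\Phi}=\sum_\mu\col{M_\mu}\row{M_\mu}$ and $\jam{\Psi}=\sum_\nu\col{N_\nu}\row{N_\nu}$, both positive semi-definite. By definition, $\Phi$ and $\Psi$ are bi-orthogonal iff $\jam{\Phi}_1\jam{\Psi}_1=0$ and $\jam{\Phi}_2\jam{\Psi}_2=0$, where the subscripts denote the two partial traces. Using property~3 of the $\mathbf{vec}$ map, $\ptr{1}{\col{M_\mu}\row{M_\mu}}=M_\mu M_\mu^\dagger$ and $\ptr{2}{\col{M_\mu}\row{M_\mu}}=(M_\mu^\dagger M_\mu)^\trans$, so $\jam{\Phi}_1=\sum_\mu M_\mu M_\mu^\dagger$ and $\jam{\Phi}_2=(\sum_\mu M_\mu^\dagger M_\mu)^\trans$, and similarly for $\Psi$.

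First I would establish the equivalence of the bi-orthogonality of the Jamio{\l}kowski operators with the Kraus-level conditions $M_\mu^\dagger N_\nu=0$ and $M_\mu N_\nu^\dagger=0$ for all $\mu,\nu$. The key fact, already recorded in the preliminaries, is that for positive semi-definite $X,Y$ one has $XY=0$ iff $\inner{X}{Y}=0$. Apply this to $X=\jam{\Phi}_1=\sum_\mu M_\mu M_\mu^\dagger$ and $Y=\jam{\Psi}_1=\sum_\nu N_\nu N_\nu^\dagger$: since each summand is positive semi-definite, $\inner{X}{Y}=\sum_{\mu,\nu}\tr{M_\mu M_\mu^\dagger N_\nu N_\nu^\dagger}=\sum_{\mu,\nu}\Fnorm{M_\mu^\dagger N_\nu}^2$, a sum of non-negative terms, so $XY=0$ iff $M_\mu^\dagger N_\nu=0$ for all $\mu,\nu$. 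The same argument applied to the second partial traces (the transposes are harmless, as $Z^\trans=0\iff Z=0$ and transposition preserves the Hilbert--Schmidt inner product between the relevant positive operators) gives $\jam{\Phi}_2\jam{\Psi}_2=0$ iff $M_\mu N_\nu^\dagger=0$ for all $\mu,\nu$. This handles the first ``if and only if.''

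Next I would connect the Kraus conditions to the super-operator identities. For $X\in\lin{\cH}$ we compute $(\Phi\circ\Psi^\dagger)(X)=\sum_{\mu,\nu}M_\mu N_\nu^\dagger X N_\nu M_\mu^\dagger$, using that $\Psi^\dagger=\sum_\nu Ad_{N_\nu^\dagger}$. If $M_\mu N_\nu^\dagger=0$ for all $\mu,\nu$ this is manifestly zero; conversely, to extract ``$M_\mu N_\nu^\dagger=0$'' from ``$\Phi\circ\Psi^\dagger=0$'' I would pair against the Hilbert--Schmidt inner product: $\inner{Y}{(\Phi\circ\Psi^\dagger)(X)}=\sum_{\mu,\nu}\tr{Y^\dagger M_\mu N_\nu^\dagger X N_\nu M_\mu^\dagger}$, and choosing $X,Y$ suitably (e.g. rank-one operators built from a common vector) isolates each term $\Fnorm{M_\mu N_\nu^\dagger}$-like quantity; alternatively, note $(\Phi\circ\Psi^\dagger)$ vanishing forces its Jamio{\l}kowski operator $\sum_{\mu,\nu}\col{M_\mu N_\nu^\dagger}\row{M_\mu N_\nu^\dagger}$ (a sum of positive rank-one terms) to vanish, hence each $\col{M_\mu N_\nu^\dagger}=0$. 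Similarly $\Phi^\dagger\circ\Psi=0\iff M_\mu^\dagger N_\nu=0$ for all $\mu,\nu$. The final equivalence with $\Psi\circ\Phi^\dagger=0$ and $\Psi^\dagger\circ\Phi=0$ follows by symmetry: the conditions $M_\mu^\dagger N_\nu=0$ and $M_\mu N_\nu^\dagger=0$ are symmetric in the roles of $\{M_\mu\}$ and $\{N_\nu\}$ (the second is the adjoint of $N_\nu M_\mu^\dagger=0$), so running the same computation with the two operations interchanged gives the claim.

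The main obstacle I anticipate is purely bookkeeping rather than conceptual: keeping the transposes from property~3 of the $\mathbf{vec}$ map straight when dealing with the second partial trace, and making sure the ``$\inner{X}{Y}=0\Rightarrow$ each summand vanishes'' step is justified by positive semi-definiteness of the individual terms at each stage. No step requires more than the recorded $\mathbf{vec}$ identities, the positive-semidefinite orthogonality criterion, and the canonical Kraus form, so the proof should be short once the four conditions are lined up against one another.
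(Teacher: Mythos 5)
Your argument for the first equivalence is essentially the paper's: compute the partial traces of $\jam{\Phi}$ and $\jam{\Psi}$ via the $\mathbf{vec}$ identities, invoke the orthogonality criterion for positive semi-definite operators, and use that each cross term $\tr{M_\mu M_\mu^\dagger N_\nu N_\nu^\dagger}=\fnorm{M_\mu^\dagger N_\nu}^2$ is non-negative so the vanishing of the sum forces each term to vanish. Your additional treatment of the super-operator conditions (reading $\Phi\circ\Psi^\dagger=0$ off the positive rank-one decomposition of its Choi operator, and the symmetry giving the $\Psi\circ\Phi^\dagger$ version) is correct and in fact fills in equivalences that the paper's own proof states but does not argue.
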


\begin{proof}
Note that $\jam{\Phi}=\sum_{\mu}\col{M_{\mu}}\row{M_{\mu}}$,
$\jam{\Psi}=\sum_{\nu}\col{N_{\nu}}\row{N_{\nu}}$. By the
definition, $\Phi$ and $\Psi$ are bi-orthogonal if and only if
$\jam{\Phi}$ and $\jam{\Psi}$ are bi-orthogonal, i.e.,
$\ptr{2}{\jam{\Phi}}\ptr{2}{\jam{\Psi}} =
\ptr{1}{\jam{\Phi}}\ptr{1}{\jam{\Psi}} =0$. Since
\begin{eqnarray*}
\ptr{2}{\jam{\Phi}}\ptr{2}{\jam{\Psi}} & = &
\Set{\sum_{\mu}M_{\mu}{M_{\mu}}^\dagger}\Set{\sum_{\nu}N_{\nu}{N_{\nu}}^\dagger}
= \sum_{\mu,\nu}M_{\mu}{M_{\mu}}^\dagger N_{\nu}{N_{\nu}}^\dagger
\end{eqnarray*}
and
\begin{eqnarray*}
\ptr{1}{\jam{\Phi}}\ptr{1}{\jam{\Psi}} & = &
\Set{\sum_{\mu}[{M_{\mu}}^\dagger M_{\mu}]^{\trans}}
\Set{\sum_{\nu}[{N_{\nu}}^\dagger N_{\nu}]^{\trans}} =
\sum_{\mu,\nu}[{M_{\mu}}^\dagger M_{\mu}]^{\trans}
[{N_{\nu}}^\dagger N_{\nu}]^{\trans},
\end{eqnarray*}
it follows that both $\jam{\Phi}$ and $\jam{\Psi}$ are bi-orthogonal
if and only if $M_{\mu}{M_{\mu}}^\dagger N_{\nu}{N_{\nu}}^\dagger=0$
and ${M_{\mu}}^\dagger M_{\mu}{N_{\nu}}^\dagger N_{\nu}=0$ for all
$\mu$ and $\nu$, if and only if $M^\dagger_{\mu}N_{\nu}=0$ and
$M_{\mu}N^\dagger_{\nu}=0$ for all $\mu$ and $\nu$.
\end{proof}

By mimicking the Definition \ref{Herbut}, we introduce the following
notion of bi-orthogonal decomposition for CP super-operator:
\begin{definition}
A CP super-operator $\Phi \in \T(\cH)$ has a \emph{bi-orthogonal
decomposition} if $\jam{\Phi}$ has a bi-orthogonal decomposition:
$\jam{\Phi} = \sum_{k}D_k$, where $\set{D_k}$ is a family of
pairwise bi-orthogonal positive semi-definite operators.
\end{definition}

If $\jam{\Phi}$ can be represented as a sum $\sum_{k}D_k$ of
pairwise bi-orthogonal positive semi-definite operators, decompose
each $D_k$ by the spectral decomposition theorem as
$$D_k = \sum_{i}d_k^{(i)}\col{\widetilde{M}^{(i)}_k}\row{\widetilde{M}^{(i)}_k} = \sum_{i}\col{M^{(i)}_k}\row{M^{(i)}_k},$$
where $M^{(i)}_k \in \lin{\cH}$, $\col{M^{(i)}_k} = \sqrt{d^{(i)}_k}
\col{\widetilde{M}^{(i)}_k}$ and $\inner{M^{(i)}_k}{M^{(j)}_k} =
d^{(i)}_k \delta_{ij}$, then $\Phi_k = \sum_{i} Ad_{M^{(i)}_k}$ as
$\jam{\Phi_k} = D_k$. Since $\trace_{2}{D_k} = \sum_{i}
M^{(i)}_{k}{M^{(i)}_{k}}^\dagger$ and $\trace_{1}{D_k} =
\sum_{i}[{M^{(i)}_k}^\dagger M^{(i)}_k]^\trans$, it follows form the
bi-orthogonality of $\set{D_k}$ that ${M^{(i)}_s}^\dagger M^{(j)}_t
= 0$ and $M^{(i)}_s{M^{(j)}_t}^\dagger = 0$ for any $s\neq t$ and
all sub-indices $i,j$. This implies that ${\Phi_m}^\dagger \circ
\Phi_n = 0$ and $\Phi_m \circ {\Phi_n}^\dagger = 0$ if $m\neq n$.

The following proposition can be viewed as a characterization of
$\Phi$ having a bi-orthogonal decomposition:
\begin{prop}
A CP super-operator $\Phi \in \T(\cH)$ has a bi-orthogonal
decomposition if and only if $\Phi = \sum_{k} \Phi_k$, where
$\set{\Phi_k}$ is a collection of CP super-operators in $\T(\cH)$
and ${\Phi_m}^{\dagger} \circ \Phi_n = 0$ and $\Phi_m \circ
{\Phi_n}^\dagger = 0$ for all $m\neq n$.
\end{prop}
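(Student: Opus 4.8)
The plan is to prove both implications directly at the level of Kraus operators, using the discussion immediately preceding the statement, which already establishes the harder ``only if'' direction in disguise.

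For the ``only if'' direction, suppose $\Phi = \sum_k \Phi_k$ with each $\Phi_k$ CP and ${\Phi_m}^\dagger \circ \Phi_n = 0$, $\Phi_m \circ {\Phi_n}^\dagger = 0$ for $m \neq n$. Write a Kraus representation $\Phi_k = \sum_i Ad_{M_k^{(i)}}$ for each $k$ (for instance a canonical one). First I would translate the hypotheses ${\Phi_m}^\dagger \circ \Phi_n = 0$ and $\Phi_m \circ {\Phi_n}^\dagger = 0$ into the operator identities ${M_m^{(i)}}^\dagger M_n^{(j)} = 0$ and $M_m^{(i)} {M_n^{(j)}}^\dagger = 0$ for all $m \neq n$ and all $i,j$; this is exactly the content of Proposition~\ref{bi-ortho} applied to the pair $(\Phi_m, \Phi_n)$. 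Then I would set $D_k \defeq \jam{\Phi_k} = \sum_i \col{M_k^{(i)}}\row{M_k^{(i)}}$, so that $\jam{\Phi} = \sum_k D_k$ by linearity of the Jamio{\l}kowski map, and each $D_k$ is positive semi-definite since $\Phi_k$ is CP. It remains to check that $\set{D_k}$ is pairwise bi-orthogonal, i.e.\ $\ptr{2}{D_m}\ptr{2}{D_n} = \ptr{1}{D_m}\ptr{1}{D_n} = 0$ for $m \neq n$. Using property~3 of the $\mathbf{vec}$ map, $\ptr{2}{D_k} = \sum_i M_k^{(i)}{M_k^{(i)}}^\dagger$ and $\ptr{1}{D_k} = \sum_i [{M_k^{(i)}}^\dagger M_k^{(i)}]^\trans$, and the vanishing products above give $\ptr{2}{D_m}\ptr{2}{D_n} = \sum_{i,j} M_m^{(i)}{M_m^{(i)}}^\dagger M_n^{(j)}{M_n^{(j)}}^\dagger = 0$ and likewise for the partial trace over the first factor. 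Hence $\jam{\Phi} = \sum_k D_k$ is a bi-orthogonal decomposition, so $\Phi$ has a bi-orthogonal decomposition by definition.

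For the ``if'' direction, suppose $\Phi$ has a bi-orthogonal decomposition $\jam{\Phi} = \sum_k D_k$ with $\set{D_k}$ pairwise bi-orthogonal positive semi-definite operators. This is precisely the situation analyzed in the paragraph just before the proposition: spectrally decompose each $D_k = \sum_i \col{M_k^{(i)}}\row{M_k^{(i)}}$, define $\Phi_k \defeq \sum_i Ad_{M_k^{(i)}}$, and note $\jam{\Phi_k} = D_k$, so each $\Phi_k$ is CP and $\Phi = \sum_k \Phi_k$ since $J$ is a linear bijection. The bi-orthogonality of $\set{D_k}$, again via property~3 of $\mathbf{vec}$, forces ${M_s^{(i)}}^\dagger M_t^{(j)} = 0$ and $M_s^{(i)}{M_t^{(j)}}^\dagger = 0$ for $s \neq t$, whence ${\Phi_m}^\dagger \circ \Phi_n = 0$ and $\Phi_m \circ {\Phi_n}^\dagger = 0$ for $m \neq n$. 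This is exactly the desired decomposition.

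The only genuinely delicate point is the passage from ${M_m^{(i)}}^\dagger M_n^{(j)} = 0$ (for all $i,j$) to ${\Phi_m}^\dagger \circ \Phi_n = 0$, and conversely — one must be careful that a sum of the operators $Ad_{M}$-type terms vanishes iff each cross term vanishes. This follows because $\Phi_m^\dagger \circ \Phi_n (X) = \sum_{i,j} {M_m^{(i)}}^\dagger M_n^{(j)} X {M_n^{(j)}}^\dagger M_m^{(i)}$, and one recognizes this as $\sum_{i,j} Ad_{{M_m^{(i)}}^\dagger M_n^{(j)}}$-style terms, for which vanishing of each $M$-factor is clearly sufficient; for necessity, the implication we actually need in the argument runs from the operator identities to the super-operator identities, which is the easy direction, while the reverse implications are already recorded in Proposition~\ref{bi-ortho}. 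Since everything reduces to Proposition~\ref{bi-ortho} and the spectral-decomposition paragraph, no new obstacle arises; the proof is essentially bookkeeping of these two ingredients.
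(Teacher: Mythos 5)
Your proof is correct and takes essentially the route the paper intends: the passage from a bi-orthogonal decomposition of $\jam{\Phi}$ to the decomposition $\Phi=\sum_k\Phi_k$ is exactly the spectral-decomposition paragraph preceding the proposition, and the converse is Proposition~\ref{bi-ortho} applied pairwise with $D_k=\jam{\Phi_k}$. The only quibble is that you have interchanged the labels ``if'' and ``only if'' relative to the statement as written, but both implications are established, so nothing is missing.
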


By Proposition 1 in \cite{Skowronek}, it follows from the above
Proposition \ref{bi-ortho} that
\begin{enumerate}
\item For $i=1,2$, let $\Phi_i, \Psi_i \in \T(\cH)$ be CP super-operators, $\Phi_1$ and $\Phi_2$ be bi-orthogonal, and $\Psi_1$ and $\Psi_2$ be
bi-orthogonal. Then for any CP super-operator $\Lambda\in\T(\cH)$,
$\Phi_1\circ\Lambda\circ\Psi_1$ and $\Phi_2\circ\Lambda\circ\Psi_2$
are also bi-orthogonal.

\item If $\Phi, \Psi \in \T(\cH)$ are CP and bi-orthogonal, then for any positive semi-definite
operators $X,Y \in \lin{\cH}$, $\Phi(X)$ and $\Psi(Y)$ are orthogonal.
\end{enumerate}

Our mail result of this section is the following:

\begin{thrm}\label{Th:strongadditivity}
Assume that $\Phi, \Lambda, \Psi \in \T(\cH)$ are CP and
bi-stochastic, and the following conditions hold:
\begin{enumerate}[(i)]

\item $\cH = \bigoplus_{k=1}^{K} \cH^L_k \ot \cH^R_k $,
where $\dim\cH^L_k = d^L_k, \dim\cH^R_k = d^R_k$ and $\sum_{k=1}^{K}
d^L_k d^R_k = N$;

\item $\Phi = \bigoplus_{k=1}^{K} \Phi^L_k \ot Ad_{U^R_k},
\Lambda = \bigoplus_{k=1}^{K} \Lambda^L_k \ot \Lambda^R_k$, and
$\Psi = \bigoplus_{k=1}^{K} Ad_{V^L_k} \ot \Psi^R_k$,\\
that is, $\Phi|_{\lin{\cH^L_k \ot \cH^R_k}} = \Phi^L_k \ot
Ad_{U^R_k}, \Psi|_{\lin{\cH^L_k \ot \cH^R_k}} = Ad_{V^L_k} \ot
\Psi^R_k$, and $\Lambda|_{\lin{\cH^L_k \ot \cH^R_k}} = \Lambda^L_k
\ot \Lambda^R_k$, $\Phi^L_k ,\Lambda^L_k \in \T(\cH^L_k)$ are
CP and bi-stochastic, $V^L_k \in \lin{\cH^L_k}$ are unitary operators,
$U^R_k \in \lin{\cH^R_k}$ are unitary operators and
$\Psi^R_k,\Lambda^R_k \in \T(\cH^R_k)$ are CP and bi-stochastic.
\end{enumerate}
Then we have the following strong dynamical additivity:
$$
\S(\Phi \circ \Lambda) + \S( \Lambda \circ \Psi)
 = \S(\Lambda) + \S(\Phi \circ \Lambda \circ \Psi).
$$
\end{thrm}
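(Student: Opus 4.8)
The plan is to reduce the strong dynamical additivity to the additivity of map entropy on each block of the direct-sum decomposition, and then to invoke the structural theorem on saturation of strong subadditivity (the Hayden et al. decomposition quoted in the introduction) applied to the appropriate Jamiołkowski state. First I would compute the Jamiołkowski operators. Since $\mathbf{vec}$ respects the direct-sum and tensor-product structure (properties 2 and 4 of the $\mathbf{vec}$ map, together with the block structure of $\cH$), I would show that $\jam{\Phi}$, $\jam{\Lambda}$, $\jam{\Psi}$, and all their composites are block-diagonal along the decomposition $\cH\ot\cH = \bigoplus_{k}(\cH^L_k\ot\cH^R_k)\ot(\cH^L_k\ot\cH^R_k)$, and moreover that within block $k$ each splits as a tensor product across the "left" factor $\cH^L_k\ot\cH^L_k$ and the "right" factor $\cH^R_k\ot\cH^R_k$. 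Concretely, $\rho(\Phi\circ\Lambda)$ restricted to block $k$ is (up to the normalization weight) $\rho(\Phi^L_k\circ\Lambda^L_k)\ot\rho(Ad_{U^R_k}\circ\Lambda^R_k)$, and similarly for the other composites, with $\rho(\Phi^L_k\circ\Lambda^L_k\circ Ad_{V^L_k}) = \rho(\Phi^L_k\circ\Lambda^L_k)$ on the left (conjugation by a unitary does not change the Jamiołkowski state's entropy, cf.\ the $\gamma_1,\gamma_2$ discussion) and $\rho(Ad_{U^R_k}\circ\Lambda^R_k\circ\Psi^R_k) = \rho(\Lambda^R_k\circ\Psi^R_k)$ on the right.

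Next I would track the block weights. Each of $\Phi,\Lambda,\Psi$ is bi-stochastic, so $\frac1N\jam{\cdot}$ is a state; when restricted to block $k$ the trace is $p_k = d^L_k d^R_k/N$, and these are the same for all three operations and all their composites because the diagonal blocks of $\rho(\cdot)$ have unchanged trace under composition within a block. Hence, using additivity of von Neumann entropy over a direct sum, $\S(\Phi\circ\Lambda) = H(\{p_k\}) + \sum_k p_k\big(\S(\Phi^L_k\circ\Lambda^L_k) + \S(Ad_{U^R_k}\circ\Lambda^R_k)\big)$, and likewise for $\S(\Lambda\circ\Psi)$, $\S(\Lambda)$, and $\S(\Phi\circ\Lambda\circ\Psi)$. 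Substituting and cancelling the common $H(\{p_k\})$ term, the desired identity becomes, block by block,
\begin{eqnarray*}
&&\S(\Phi^L_k\circ\Lambda^L_k) + \S(Ad_{U^R_k}\circ\Lambda^R_k) + \S(\Lambda^L_k) + \S(\Lambda^R_k\circ\Psi^R_k)\\
&=& \S(\Lambda^L_k) + \S(\Lambda^R_k) + \S(\Phi^L_k\circ\Lambda^L_k) + \S(\Lambda^R_k\circ\Psi^R_k),
\end{eqnarray*}
where I have already used the unitary-invariance to drop $Ad_{V^L_k}$ on the left and the unitaries $U^R_k, V^L_k$ where they compose trivially, and the fact that $\S(Ad_{U^R_k}\circ\Lambda^R_k) = \S(\Lambda^R_k)$ and $\S(\Lambda^L_k) = \S(\Lambda^L_k\circ Ad_{V^L_k})$ since composing with a unitary channel permutes/rotates the Kraus operators without changing the Gram data $\tr{M_m\rho M_n^\dagger}$ at $\rho=\frac1N\identity$. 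After these reductions both sides are visibly equal: everything cancels.

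The main obstacle I expect is the first step — establishing cleanly that Jamiołkowski isomorphism turns the direct-sum-of-tensor-products structure of $\Phi,\Lambda,\Psi$ into a direct-sum-of-tensor-products structure of the states $\rho(\cdot)$ and, crucially, that this structure is preserved under composition of the super-operators. One must verify that $\Phi\circ\Lambda$ again has the block-tensor form (with left part $\Phi^L_k\circ\Lambda^L_k$ and right part $Ad_{U^R_k}\circ\Lambda^R_k$), which requires checking that the off-block cross terms vanish; this is where the bi-orthogonality machinery of the previous section (Proposition \ref{bi-ortho} and the two consequences listed after it) does the work, since the restrictions to distinct blocks are bi-orthogonal CP maps and composition preserves bi-orthogonality. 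Once the block-tensor form of every relevant composite is in hand, the rest is the bookkeeping of entropies and unitary invariance sketched above, and no appeal to the deeper Hayden-type structure theorem is actually needed — the hypotheses (i)–(ii) have already imposed exactly that structure.
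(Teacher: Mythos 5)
Your proposal is correct and follows essentially the same route as the paper's own proof: observe that the hypotheses force each composite to retain the block direct-sum-of-tensor-products form, pass to the Jamio{\l}kowski states with weights $\lambda_k = d^L_k d^R_k/N$, expand each entropy as a Shannon term plus weighted block entropies using additivity over direct sums and tensor products and unitary invariance, and cancel. Your opening mention of the Hayden et al.\ structure theorem is, as you yourself note at the end, unnecessary --- the hypotheses already impose that structure directly.
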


\begin{proof}
Since
$$\Phi\circ\Lambda\circ\Psi = \sum_{k=1}^{K} \Phi^L_k \circ \Lambda^L_k \circ
Ad_{V^L_k} \ot
 Ad_{U^R_k} \circ \Lambda^R_k \circ \Psi^R_k$$
is a bi-orthogonal decomposition of $\Phi \circ \Lambda \circ \Psi$,
it follows that
$$\rho(\Phi \circ \Lambda \circ \Psi) = \sum_{k=1}^{K} \lambda_{k} \rho(\Phi^L_k \circ \Lambda^L_k \circ
Ad_{V^L_k}) \ot \rho(Ad_{U^R_k} \circ \Lambda^R_k \circ \Psi^R_k),$$
where $\lambda_{k}=\frac1N d^L_k d^R_k$ for each $k$ and
$\sum_{k=1}^{K}\lambda_{k}=1$. Thus,
\begin{eqnarray*}
\S(\Phi\circ\Lambda\circ\Psi) & = & \H(\lambda) + \sum_{k=1}^{K} \lambda_{k} \S(\Phi^L_k \circ \Lambda^L_k \circ
Ad_{V^L_k}) + \sum_{k=1}^{K} \lambda_{k} \S(Ad_{U^R_k} \circ \Lambda^R_k \circ \Psi^R_k)\\
 & = & \H(\lambda) + \sum_{k=1}^{K} \lambda_{k} \S(\Phi^L_k \circ \Lambda^L_k) + \sum_{k=1}^{K} \lambda_{k} \S(\Lambda^R_k \circ \Psi^R_k).
\end{eqnarray*}
Similarly,
\begin{eqnarray*}
\S(\Phi\circ\Lambda) & = & \H(\lambda) + \sum_{k=1}^{K} \lambda_{k} \S(\Phi^L_k \circ \Lambda^L_k) + \sum_{k=1}^{K} \lambda_{k} \S(\Lambda^R_k),\\
\S(\Lambda\circ\Psi) & = & \H(\lambda) + \sum_{k=1}^{K} \lambda_{k} \S(\Lambda^L_k) + \sum_{k=1}^{K} \lambda_{k} \S(\Lambda^R_k \circ \Psi^R_k),\\
\S(\Lambda) & = & \H(\lambda) + \sum_{k=1}^{K} \lambda_{k} \S(\Lambda^L_k) + \sum_{k=1}^{K} \lambda_{k} \S(\Lambda^R_k),
\end{eqnarray*}
where $\H(\lambda) = -\sum_{k=1}^{K} \lambda_{k}\log_{2}\lambda_{k}$ is
the \emph{Shannon entropy} of
$\lambda = (\lambda_{1},\lambda_{2},\ldots,\lambda_{K})$. It follows
from these equalities that
$\S(\Phi \circ \Lambda) + \S(\Lambda\circ\Psi)
 = \S(\Lambda) + \S(\Phi \circ \Lambda \circ \Psi)$.
\end{proof}


\section{Concluding Remarks}

In a closed quantum system, clearly
$\Phi=Ad_{U},\Lambda=Ad_{V},\Psi=Ad_{W}$ can saturate the Strong
Dynamical Subadditivity(SDS), where $U,V,W$ are unitary operators.
Thus this is trivial case. More generally, in an open quantum
system, there is a complete different scenario. In order to saturate
the SDS, local operation---today commonly called \emph{no-signaling}
\cite{D'Ariano}---could be considered in this case. It can be seen
from the Theorem \ref{Th:strongadditivity} that when
$\Phi,\Lambda,\Psi$ are all local operations, then SDS is saturated
by no-signaling operations. Hence the underlying Hilbert space and
corresponding quantum operations can be viewed as  a bipartite space
and bipartite operations. Intuitively, a quantum operation is
no-signaling if it cannot be used by spatially separated parties to
violate relativistic causality, i.e., no-signaling quantum operation
jointly implemented by several parties that cannot use it to
communicate with each other. Therefore, the entropy of the composite
quantum operations is just changed locally. The sufficient condition
in Theorem \ref{Th:strongadditivity} is supported by the
impossibility of communicating by local operations. It is
conjectured that SDS cannot be saturated by non-local operations. So
looking for a necessary condition to Theorem
\ref{Th:strongadditivity} may be restricted within the set of
no-signalling operations.

A possible application can be expected by the following
consideration. Firstly, we recall some concepts for quantum states.
The so-called \emph{squashed entanglement} \cite{Christandl} are
proposed recently by Christandl \emph{et. al.} and some attractive
properties of it are established. Among all known entanglement
measures, squashed entanglement is the entanglement measure which
satisfies most properties that have been proposed as useful for an
entanglement measure \cite{Brandao}. The squashed entanglement is
related to the strong subadditivity of entropy for quantum states.
It is described by the quantity
$$
E_{sq}(\rho^{AB}) = \inf_E\set{\frac12 I(A;B|E) : \rho^{ABE} \mbox{\
extension of\ } \rho^{AB}},
$$
where
$$
I(A;B|E) = \S(\rho^{AE}) + \S(\rho^{BE}) - \S(\rho^{ABE}) -
\S(\rho^{E})
$$
is the quantum conditional mutual information of $\rho^{ABE}$, which
measures the correlations of two quantum systems relative to a third
one. One important property of $E_{sq}$ is that it is faithful
\cite{Brandao}: $E_{sq}(\rho^{AB}) = 0$ if and only if $\rho^{AB}$
is separable state. Based on this result, an approximate version of
the fact are obtained that states $\rho^{ABE}$ with zero conditional
mutual information $I(A;B|E)$ are such that $\rho^{AB}$ is
separable, that is, if a tripartite state has small conditional
mutual information, its $AB$ reduction is close to a separable
state. This problem is left open at the end of \cite{Hayden}. The
conditional mutual information $I(A;B|E)$ is also used to demarcates
the edges of quantum correlations by Datta \cite{Datta}.

The above developments motivate naturally us to consider analogous
problems for quantum operations. For instance, for the given quantum
operations $\Phi,\Lambda,\Psi \in \T(\cH)$, when they are all
bi-stochastic, the quantity
$$
I(\Phi;\Psi|\Lambda) = \S(\Phi \circ \Lambda) + \S(\Lambda \circ
\Psi) - \S(\Phi \circ \Lambda \circ \Psi) - \S(\Lambda)
$$
can be defined similarly, but unfortunately which is asymmetric with
respect to the pair $(\Phi,\Psi)$, compared with the quantum state
situation. This is clear since different composite ordering of
quantum operations lead to different magnitudes. Apparently, there
are two extreme quantity
$$
\sup_\Lambda \Set{I(\Phi;\Psi|\Lambda) : \Lambda \in \T(\cH) \mbox{\
being CP and bi-stochastic}},$$
and
$$
\inf_\Lambda \Set{I(\Phi;\Psi|\Lambda) : \Lambda \in \T(\cH) \mbox{\
being CP and bi-stochastic}}
$$
can be considered. They may signify the maximal/minimum capacity of
decoherence induced by the composition of quantum operations $\Phi$
and $\Psi$. We leave it for the future research.


\subsection*{Acknowledgement.} We thank anonymous referees for comments on an earlier version of this
paper.



\end{document}